\begin{document}

%Version 10. August 09

\title{The non-existence of a $[[13,5,4]]$-quantum stabilizer code}
\author{J\"urgen Bierbrauer\\
Department of Mathematical Sciences\\
Michigan Technological University\\
Houghton, Michigan 49931 (USA)\\ \\
Stefano Marcugini and Fernanda Pambianco \\
Dipartimento di Matematica e Informatica\\
Universit\`a degli Studi di Perugia\\
Perugia (Italy)}

\maketitle
\newtheorem{Theorem}{Theorem}
\newtheorem{Proposition}{Proposition}
\newtheorem{Lemma}{Lemma}
\newtheorem{Definition}{Definition}
\newtheorem{Corollary}{Corollary}
\newtheorem{Example}{Example}
\def\nz{\mathbb{N}}
\def\gz{\mathbb{Z}}
\def\rz{\mathbb{R}}
\def\ef{\mathbb{F}}
\def\CC{\mathbb{C}}
\def\o{\omega}
\def\p{\overline{\omega}}
\def\e{\epsilon}
\def\a{\alpha}
\def\b{\beta}
\def\g{\gamma}
\def\d{\delta}
\def\l{\lambda}
\def\s{\sigma}
\def\bsl{\backslash}
\def\la{\longrightarrow}
\def\arr{\rightarrow}
\def\ov{\overline}
\def\sm{\setminus}
\newcommand{\D}{\displaystyle}
\newcommand{\T}{\textstyle}

\begin{abstract}
We solve one of the oldest problems in the theory of quantum stabilizer codes
by proving the non-existence of quantum $[[13,5,4]]$-codes.
\end{abstract}

\section{Introduction}
\label{introsection}

After the determination of the parameter spectrum of 
additive quantum codes of distance $3$ (see~\cite{d3quantspectrum}) the oldest open
existence problem for quantum stabilizer codes concerns the parameters $[[13,5,4]].$
We give a negative answer:

\begin{Theorem}
\label{maintheorem}
There is no $[[13,5,4]]$-quantum stabilizer code.
\end{Theorem}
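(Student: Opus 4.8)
\medskip
\noindent\textit{Outline of the intended proof.} The idea is to convert Theorem~\ref{maintheorem} into a finite, checkable classification problem and then to eliminate every surviving case. First I would invoke the standard dictionary between stabilizer codes and additive codes: a binary $[[13,5,4]]$ stabilizer code exists if and only if there is an additive code $C\subseteq\ef_4^{13}$ with $\dim_{\ef_2}C=8$ that is self-orthogonal for the trace-Hermitian form and satisfies $\mathrm{wt}(v)\ge 4$ for every $v\in C^{\perp}\sm C$, where $C^{\perp}$ denotes the trace-Hermitian dual (of size $2^{18}$). Equivalently, every codeword of $C^{\perp}$ of weight at most $3$ already lies in $C$. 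I would fix this reformulation and record its elementary consequences: $C$ has full support, the degeneracy type of each coordinate is severely restricted, and the Hamming weight enumerators $A$ of $C$ and $B$ of $C^{\perp}$ are tied together by the (trace-Hermitian) MacWilliams identity $B(x,y)=2^{-8}A(x+3y,\,x-y)$ together with $A_0=B_0=1$ and $A_1=B_1$, $A_2=B_2$, $A_3=B_3$.

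The next step is to list all feasible weight spectra. Combining the MacWilliams relations and the distance equalities with non-negativity and integrality of the $A_i$ and $B_i$, with the shadow (Rains-type) inequalities available for stabilizer codes, and with the constraints forced on the weight distributions of the punctured and shortened codes, one is reduced to only finitely many candidate pairs $(A,B)$. It must be stressed that the plain linear-programming bound does \emph{not} rule out the parameters $[[13,5,4]]$ --- which is precisely why the problem has resisted --- so this step only prunes the possibilities, it does not finish the argument.

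For each surviving spectrum I would then study the geometric structure of $C$ (equivalently of $C^{\perp}$): the columns of a generator matrix form a configuration of $13$ lines, some possibly degenerating to points, in a binary projective space, and the distance-$4$ requirement becomes a condition on how the hyperplanes meet this configuration. Here two further ingredients enter. One is the complete determination of the distance-$3$ stabilizer spectrum from~\cite{d3quantspectrum}: puncturing and shortening a hypothetical $[[13,5,4]]$ code yields shorter stabilizer codes of lower distance whose existence pattern is known, and this forces strong restrictions on $C$. The other is a split according to the low-weight structure of $C$ --- whether the code is pure and exactly which weight-$1$, $2$, $3$ codewords it carries (these must coincide with the corresponding codewords of $C^{\perp}$) --- and according to whether $C$ contains a non-trivial $\ef_4$-linear subcode, which makes the classification of classical $\ef_4$-codes applicable. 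Inside each of the now-few configuration types, a finite exhaustive search with isomorph rejection over the admissible column configurations, testing self-orthogonality and the distance-$4$ condition, should turn up no valid $C$; this would complete the proof.

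The decisive difficulty is exactly this final reduction-and-search: the family of trace-Hermitian self-orthogonal additive codes $C\subseteq\ef_4^{13}$ with $\dim_{\ef_2}C=8$ is far too large to attack directly, so the entire argument hinges on the weight-spectrum and structural constraints cutting the search down to a certifiable size, and on organizing the case distinction --- the impure cases in particular --- so that it is provably complete.
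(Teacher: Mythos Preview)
Your outline is a plausible research plan, but it is not a proof, and it takes a route quite different from the paper's. The paper makes no use of MacWilliams relations, shadow inequalities, or weight-spectrum enumeration; it works entirely in the geometric model of thirteen codelines in $PG(7,2)$ and organises the case analysis around a specific factor-space construction.

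Two differences are worth noting. First, purity: you propose to carry the impure cases along as one branch of a general split on the low-weight words of $C$, whereas the paper eliminates impurity at the outset by a short self-contained geometric argument. If three codelines fail general position, the remaining ten lie in a hyperplane; iterating this observation against the known nonexistence of short additive $[n,k,4]_4$ codes forces successively more codelines into smaller subspaces until two codelines coincide, a contradiction. No enumerator computations enter.

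Second, and this is the decisive structural idea your outline lacks, the reduction to a searchable size is not via weight spectra but via a factor space. Fix two codelines $L_1,L_2$ and pass to $\Pi=PG(3,2)$, the quotient by $\langle L_1,L_2\rangle$; the remaining eleven codelines project to a weighted multiset of lines in $\Pi$, and self-orthogonality becomes a parity (``quantum'') condition on how each line of $\Pi$ meets this multiset. The argument then bounds the induced weights $w(g),w(P),w(E)$ on lines, points and planes of $\Pi$, using the classification of $(n,m)$-sets of strength $3$ in small projective spaces --- in particular the three $(7,0)$-sets in $PG(6,2)$ --- together with several targeted computer checks. These bounds, culminating in $w(E)\le 3$ (equivalently: no hyperplane of $U$ contains six codelines), are what collapse the problem to a handful of explicit configurations of five or six lines, each then eliminated by machine.

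Your pipeline --- LP pruning of spectra followed by an isomorph-rejecting search over self-orthogonal additive codes --- might well be feasible with enough computation, but nothing in the outline explains why the residual search is certifiably small; you yourself flag this as the open difficulty. The paper's factor-space and weight-bound mechanism is exactly the missing structural lever.
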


The reduction of the problem of quantum error-correction to codes in symplectic geometry
essentially is in~\cite{CRSS}. For a geometric approach see also~\cite{quantgeom}.
We use the following definitions:

\begin{Definition}
\label{addcodebasicdef}
Let $k$ be such that $2k$ is a positive integer.
An additive quaternary $[n,k]_4$-code ${\cal C}$ (length $n,$
dimension $k$) is a $2k$-dimensional subspace of $\ef_2^{2n},$
where the coordinates come in pairs of two.
We view the codewords as $n$-tuples where the coordinate entries are elements
of $\ef_2^2.$
\par
A {\bf generator matrix} of ${\cal C}$ is a binary $(2k,2n)$-matrix whose
rows form a basis of the binary vector space ${\cal C}.$
\end{Definition}

In the case of quantum stabilizer codes we view the ambient space $\ef_2^{2n}$ as
a binary symplectic space, where each of the $n$ parameter sections corresponds to
a hyperbolic plane, equivalently a $2$-dimensional symplectic space. 
Each codeword is therefore a vector in the $2n$-dimensional
symplectic geometry over $\ef_2.$

\begin{Definition}
\label{quantumcodedef} 
A quaternary quantum stabilizer code is
an additive quaternary code $C$ which is contained in its dual, where duality
is with respect to the symplectic form.
\end{Definition}

Describe $C$ by a generator matrix $M.$ Each of the $n$ coordinate sections
contains $2$ columns which we view as points in binary projective space. The geometric
description of the quantum code is in terms of the system of $n$ lines
(the codelines) generated by those $n$ pairs of points. 
 
\begin{Definition}
\label{strengthdef}
Let $C$ be a quaternary additive code of length $n,$ with generator matrix $M.$
The {\bf strength} of $C$ is the largest number $t$ such that any $t$ codelines
are in general position.
\end{Definition}

Observe that the strength $t(C)$ is one less than the dual distance.

\begin{Definition}
\label{generalquantumcodedef}
An $[[n,m,d]]$-code $C$ where $m>0$ is a quaternary quantum stabilizer 
code of binary dimension $n-m$ satisfying the following: 
any codeword of $C^{\perp}$ having weight at most $d-1$ is in $C.$  
\par
The code is {\bf pure} if $C^{\perp}$ does not contain codewords of weight $\leq d-1,$
equivalently if $C$ has strength $t\geq d-1.$
\par
An $[[n,0,d]]$-code $C$ is a self-dual quaternary quantum stabilizer 
code of strength $t=d-1.$
\end{Definition}

The optimal parameters of quantum stabilizer codes of length $\leq 13$ are known,
with the sole exception of parameters $[[13,5,4]]$ (see the database in~\cite{Grassl}).
The remainder of the paper is dedicated to a proof of Theorem~\ref{maintheorem}.
Assume $C$ is a $[[13,5,4]]$-quantum code. In the next section we show that $C$ is necessarily pure.

\section{The purity of the code}
\label{puresection}

\begin{Proposition}
\label{pureprop}
Let $C$ be a $[[13,5,4]]$-quantum code. Then $C$ is pure.
\end{Proposition}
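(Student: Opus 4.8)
The plan is to argue by contradiction. If $C$ is not pure then, by definition, $C^{\perp}$ contains a nonzero codeword $c$ of weight $w\le d-1=3$; by the defining property of a $[[13,5,4]]$-code we have $c\in C$, and since $C\subseteq C^{\perp}$ the stabilizer $C$ itself contains a nonzero codeword of (quaternary) weight $w$ with $1\le w\le 3$. I would then derive a contradiction by deleting the $w$ coordinate sections carrying the support of $c$, obtaining a quantum stabilizer code of length $13-w$ whose parameters are already known to be impossible.

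To make the deletion precise, let $T$ be the set of the $w$ sections on which $c$ is supported and let $V\subseteq\ef_2^{2w}$ be the image of $C$ under the projection onto these sections. The crucial point is that, since $w\le 3<4=d$, every codeword of $C^{\perp}$ whose support is contained in $T$ already lies in $C$. Consequently the zero-extension of any vector of the symplectic dual $V^{\perp}$ (computed inside the $2w$-dimensional symplectic space attached to $T$) is a codeword of $C^{\perp}$ supported in $T$, hence lies in $C$; this forces $V^{\perp}\subseteq V$, i.e.\ $V$ is co-isotropic, so $w\le\dim V$. Moreover the restriction of $c$ to $T$ is a nonzero vector of $V^{\perp}$ (using $c\in C\subseteq C^{\perp}$), so $V^{\perp}\ne 0$ and $\dim V\le 2w-1$. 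Co-isotropy of $V$ is exactly what makes the restriction to the remaining $13-w$ sections of the subcode $\{x\in C:\ x|_T\in V^{\perp}\}$ symplectically self-orthogonal, and a dimension count identifies this restriction as an $[[13-w,m',d']]$-quantum code with $m'=5-w+\dim V\ge 5$ and (this is the point that must be checked) $d'\ge 4$.

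Finally I would invoke the known classification of short quantum stabilizer codes. When $w=3$ the parameters $[[10,m',4]]$ with $m'\ge 5$ violate the quantum Singleton bound $n\ge m+2(d-1)$; the same bound kills the case $w=2$, $\dim V=3$, which gives $[[11,6,4]]$. The two remaining possibilities are $[[11,5,4]]$ (from $w=2$, $\dim V=2$) and $[[12,5,4]]$ (from $w=1$), and neither exists: all quantum stabilizer codes of length at most $12$ are classified (see \cite{Grassl}), and indeed the existence of either would produce a $[[13,5,4]]$-code by appending a trivial section. Thus every case yields a contradiction and $C$ must be pure.

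I expect the delicate part to be the deletion step, and within it the verification that the minimum distance does not drop, i.e.\ that $d'\ge 4$: one must check that no codeword of weight $\le 3$ of the dual of the punctured code escapes the punctured stabilizer, which again uses $w<d$ together with a description of the dual of $\{x\in C:\ x|_T\in V^{\perp}\}$. The cases $w=2$ and $w=3$ are genuinely harder than $w=1$, because there the naive restriction of all of $C$ to the complement of $T$ need not be self-orthogonal, so one must pass to the subcode cut out by $V^{\perp}$ and track carefully how the binary dimension, hence $m'$, changes.
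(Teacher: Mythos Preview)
Your plan is a natural ``shortening'' strategy, but it diverges entirely from the paper's proof, and the step you yourself flag as delicate is a genuine obstruction, not just a bookkeeping detail.

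\medskip

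\textbf{The gap at $d'\ge 4$.}  With your notation, one checks that $C'=\pi_{T^c}(C_0)$ where $C_0=\{x\in C: x|_T=0\}$, while $(C')^{\perp}=\pi_{T^c}(C^{\perp})$.  For a low-weight $z\in(C')^{\perp}$ to land in $C'$ you need $(0,z)\in C$; but $(0,z)\in C^{\perp}$ holds only when $z\in(\pi_{T^c}(C))^{\perp}$, and $\pi_{T^c}(C)=C'$ precisely when $\dim V=w$ (the Lagrangian case).  When $\dim V>w$ the quotient $\pi_{T^c}(C)/C'$ has positive dimension and sits inside $(C')^{\perp}/C'$.  In particular, if $C$ contains a \emph{second} impurity $c'$ of weight $\le 3$ whose restriction $c'|_T$ lies in $V\setminus V^{\perp}$, then $\pi_{T^c}(c')$ is a weight-$\le 3$ element of $(C')^{\perp}\setminus C'$, so $d'\le 3$.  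Nothing in your setup rules this out, so the inequality $d'\ge 4$ cannot be established by the mechanism you sketch; the cases $w=2,\dim V=3$ and $w=3,\dim V\ge 4$ are exactly the ones where you most need it (for the Singleton contradiction) and exactly the ones where it is in doubt.

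\medskip

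\textbf{The closing remark is also off.}  Appending a section together with one weight-$1$ stabilizer generator turns an $[[n,k,d]]$ into an $[[n+1,k-1,d]]$, not an $[[n+1,k,d]]$; so existence of $[[12,5,4]]$ would yield only $[[13,4,4]]$, and your parenthetical justification is circular in any case.  You may still cite Grassl's tables for the non-existence of $[[11,5,4]]$ and $[[12,5,4]]$, but those facts themselves rest on linear-programming or exhaustive arguments of comparable depth to the proposition you are proving.

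\medskip

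\textbf{What the paper does instead.}  The paper never shortens to a smaller quantum code.  It first invokes a geometric lemma (from \cite{quantgeom}) to show that all thirteen codeobjects are genuine lines in $PG(7,2)$ with no repetitions, so the minimum weight of $C^{\perp}$ is at least $2$.  Assuming impurity then forces three codelines $L_1,L_2,L_3$ into a $PG(4,2)$.  Lemma~\ref{obviouslemma} puts the remaining ten codelines in a hyperplane; the non-existence of an additive $[10,6.5,4]_4$ code forces three of those ten into a $PG(4,2)$, pushing seven lines into a secundum; the non-existence of $[7,4,4]_4$ and $[9,6,4]_4$ drive further such collapses.  Iterating, one eventually obtains seven codelines in a common $PG(4,2)$, and a final application of the lemma to three overlapping triples among them forces two codelines to coincide, contradicting the earlier step.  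The entire argument stays inside classical additive-code non-existence results and the incidence geometry of $PG(7,2)$.
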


In general the geometric objects defined by the column pairs of a generator matrix
(which we called codelines) may be lines, points or even the empty set
(if the corresponding pair of columns has all entries $=0$).
The following basic fact follows from the definition:

\begin{Lemma}
\label{obviouslemma}
Whenever some $\leq d-1$ codelines of a quantum code of distance $d$
are not in general position
there is a hyperplane containing all the remaining codelines.
\end{Lemma}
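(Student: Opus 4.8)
The plan is to read off the desired hyperplane from a single well-chosen codeword of $C$. First I would recall that the codelines are the $n$ lines $L_1,\dots,L_n$ of $PG(n-m-1,2)$ spanned by the $n$ column-pairs of the generator matrix $M$, and that, since the strength of $C$ equals one less than its dual distance, the failure of general position of a set of codelines is detected precisely by low-weight codewords of the dual. Concretely, if codelines $L_{i_1},\dots,L_{i_s}$ with $s\le d-1$ are not in general position, I would produce a nonzero codeword $v\in C^{\perp}$ whose support is contained in the sections $\{i_1,\dots,i_s\}$; in particular $\mathrm{wt}(v)\le s\le d-1$.

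The crux is the next step, where the defining property of an $[[n,m,d]]$-code (Definition~\ref{generalquantumcodedef}) enters: every codeword of $C^{\perp}$ of weight at most $d-1$ already lies in $C$. Applying this to $v$ gives $v\in C$, so $v$ is a combination of the rows of $M$, say $v=uM$ with $u\in\ef_2^{n-m}$; since $v\neq 0$ and the rows of $M$ are independent, $u\neq 0$, and hence $H:=\{x : u\cdot x=0\}$ is a genuine hyperplane of $PG(n-m-1,2)$.

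It then remains to check that $H$ contains every remaining codeline. For each coordinate $j$ the entry $v_j$ equals $u\cdot(\text{$j$-th column of }M)$, so a section $i$ lies outside the support of $v$ exactly when both columns of that section are annihilated by $u$, i.e.\ exactly when $L_i\subseteq H$. Since the support of $v$ is confined to $\{i_1,\dots,i_s\}$, every section $j$ outside this set satisfies $L_j\subseteq H$, which is precisely the claim. The one point demanding care is the translation in the first paragraph: general position of the codelines is a statement about linear dependencies among the columns of $M$, whereas the duality here is symplectic, so one must observe that the symplectic form only permutes the two coordinates within each section and therefore preserves supports, reconciling the geometric and the dual-distance descriptions. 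Everything else is bookkeeping.
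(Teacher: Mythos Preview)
Your argument is correct and is precisely the unfolding the paper has in mind: the text simply states that Lemma~\ref{obviouslemma} ``follows from the definition'' and gives no further proof, and your write-up supplies exactly those details --- produce a low-weight word of $C^{\perp}$ from the failure of general position, pull it back into $C$ via Definition~\ref{generalquantumcodedef}, and read off the hyperplane from the corresponding linear form $u$. Your closing remark about the symplectic versus Euclidean dual (the within-section swap $\sigma$ preserving supports) is the only nontrivial point, and you handle it correctly.
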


In the remainder of this section we prove Proposition~\ref{pureprop}.
It follows from Proposition 3.1 of~\cite{quantgeom} that the codeobjects of $C$ are indeed
lines and that no line occurs more than once. Quantum code $C$ is therefore described by a set of $13$ different lines in $PG(7,2).$ Observe that the (quaternary) minimum weight of nonzero words in $C^{\perp}$ therefore is $\geq 2.$ 
As we are assuming that $C$ is not pure
there are three codelines $L_1,L_2,L_3$ contained in a subspace $PG(4,2).$

\begin{Lemma}
\label{supp3lemma}
Let $L_i,L_j,L_k$ be three codelines not in general position. Let
$v(\lbrace L_i,L_j,L_k\rbrace)\in C$ a nonzero codeword with support in coordinates
$i,j,k.$
\end{Lemma}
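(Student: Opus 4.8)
The statement of Lemma~\ref{supp3lemma} is actually incomplete as printed—it defines a quantity $v(\lbrace L_i,L_j,L_k\rbrace)$ but makes no assertion about it. Presumably the intended claim is: \emph{there exists} such a nonzero codeword $v\in C$ whose support is contained in $\{i,j,k\}$, and moreover (this is the content worth proving) its support is \emph{exactly} $\{i,j,k\}$, i.e. it has quaternary weight $3$. Let me sketch a proof of that reading.

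\medskip

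The plan is to extract the codeword from the failure of general position and then pin down its weight using the structure already established for $C$. First I would invoke Lemma~\ref{obviouslemma}: since $L_i,L_j,L_k$ are three codelines not in general position and $d-1=3$, there is a hyperplane $H=PG(6,2)$ containing all the remaining $10$ codelines. Dualizing, the linear functional defining $H$ is a vector in $\ef_2^{2n}$ that pairs to zero (symplectically) with every row of the generator matrix $M$ in coordinate sections other than $i,j,k$; hence it yields a codeword $v\in C^{\perp}$ whose support is contained in $\{i,j,k\}$. Because $C$ is a $[[13,5,4]]$-code and $v$ has weight $\le 3 = d-1$, the defining property in Definition~\ref{generalquantumcodedef} forces $v\in C$. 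This $v$ is the codeword $v(\lbrace L_i,L_j,L_k\rbrace)$, and it is nonzero because if it vanished the hyperplane would contain all $13$ codelines, which is impossible for a code of length $13$ meeting the geometry (the codelines span $PG(7,2)$, as is implicit in the preceding paragraph).

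\medskip

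The remaining point—and the one I expect to be the real content—is that the support of $v$ is all of $\{i,j,k\}$, not a proper subset. Here I would use the fact, recalled just above the lemma in the excerpt, that the minimum quaternary weight of nonzero words in $C^{\perp}$ is $\ge 2$: that rules out weight $1$, so $v$ cannot have support of size $1$. To rule out support of size $2$, say $\{i,j\}$, I would argue that this would make $L_i,L_j$ themselves not in general position (two codelines fail general position only if they meet, and a weight-$2$ word supported on $\{i,j\}$ would exhibit a common point or a linear dependence between the two codelines); but then Lemma~\ref{obviouslemma} applied to these two lines gives a hyperplane through the other $11$ codelines, and combined with the earlier structural result that $C$ consists of $13$ \emph{distinct} lines this should contradict the spanning property, or more directly contradict strength considerations. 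So the support has size exactly $3$.

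\medskip

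The main obstacle is the size-$2$ case: one has to be careful about what ``not in general position'' means for a small number of lines and to make sure the argument does not secretly assume purity (which is exactly what we are trying to disprove). I would handle this by working directly with the generator matrix $M$ and the symplectic form: a weight-$2$ codeword $v\in C\subseteq C^{\perp}$ supported on $\{i,j\}$ means the two codelines $L_i,L_j$ lie in a common hyperplane together with all others, i.e. the $13$ codelines lie in $PG(6,2)$; since the $13$ distinct codelines of $C$ span $PG(7,2)$ (Proposition 3.1 of~\cite{quantgeom} gives genuine lines, and a dimension count for a $[[13,5,4]]$ code forces full rank), this is a contradiction. Thus every nonzero codeword of $C$ supported on three positions that fail general position has full weight $3$ on those positions, which is the assertion of Lemma~\ref{supp3lemma}.
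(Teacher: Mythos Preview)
Your reading of the lemma is too strong, and the extra claim you try to prove is both unnecessary and false.

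The paper treats Lemma~\ref{supp3lemma} as a definitional remark: given three codelines not in general position, it simply \emph{names} some nonzero codeword $v\in C$ supported in $\{i,j,k\}$. Immediately after the lemma the paper writes ``Observe that $v(\{L_i,L_j,L_k\})$ has weight $2$ or $3$.'' So weight $2$ is explicitly allowed; the intended content is only existence, which follows straight from the definition of distance: a dependence among $L_i,L_j,L_k$ is a nonzero word of $C^{\perp}$ of weight $\le 3=d-1$, hence lies in $C$. Your first two paragraphs essentially reproduce this (with a somewhat roundabout detour through Lemma~\ref{obviouslemma}), and that part is fine.

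The genuine gap is in your attempt to exclude weight $2$. You assert that a weight-$2$ codeword $v\in C$ supported on $\{i,j\}$ forces \emph{all $13$} codelines into a hyperplane of $PG(7,2)$. This is not what such a codeword encodes. Nonzero elements of $C$ correspond to linear functionals on $\ef_2^8$: section $l$ of $v$ is zero precisely when $L_l$ lies in the kernel hyperplane. Hence a weight-$2$ word with support $\{i,j\}$ gives a hyperplane containing the \emph{other eleven} codelines, while $L_i$ and $L_j$ each meet it only in a point. There is no contradiction with the codelines spanning $PG(7,2)$, and indeed the purity proof that follows in the paper repeatedly exploits exactly such containments (ten lines in a hyperplane, nine lines in a secundum, etc.). So your ``main obstacle'' paragraph rests on a misidentification of the geometric meaning of low-weight words in $C$, and the conclusion that the support must be all of $\{i,j,k\}$ does not hold at this stage of the argument.
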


Observe that $v(\lbrace L_i,L_j,L_k\rbrace)$ in Lemma~\ref{supp3lemma} has weight $2$
or $3.$
\par
The $10$ remaining codelines are in a hyperplane $H.$
In the sequel we use basic facts concerning additive quaternary codes, see~\cite{08addcodes}. The nonexistence of a quaternary additive $[10,6.5,4]$ and its dual shows that the family of remaining codelines cannot have strength $3.$ It follows that
$L_4,L_5,L_6$ are in a subspace $PG(4,2).$ By Lemma~\ref{obviouslemma} there is a
hyperplane containing all codelines $\notin\lbrace L_4,L_5,L_6\rbrace .$
This shows that the $7$ codelines $\notin\lbrace L_1,\dots ,L_6\rbrace $ are contained in a secundum $S$ (a $PG(5,2)$). The non-existence of a quaternary $[7,4,4]$-code and its dual shows that three of the seven remaining lines ($L_7,L_8,L_9,$ say) are not in general position.
It follows from Lemma~\ref{obviouslemma} that $L_{10},\dots ,L_{13}$ are contained in
a subspace $PG(4,2).$
\par
We start from the information that some four lines which we now call $L_1,L_2,L_3,L_4$
are in a subspace $PG(4,2).$ The codewords
$v(\lbrace L_1,L_2,L_3\rbrace)$ and $v(\lbrace L_2,L_3,L_4\rbrace)$ show that there is a
secundum $S$ (a $PG(5,2)$) containing the remaining $9$ codelines. The usual argument,
based on the non-existence of a quaternary $[9,6,4]$-code, shows that there is a
$PG(4,2)$ containing $6$ codelines. 
\par
Start again and use the knowledge that some six codelines $L_1,\dots ,L_6$ are contained in a subspace $PG(4,2).$ Applying our argument to subsets of three codelines shows
that the remaining $7$ codelines are contained in a $PG(4,2).$ 
\par
Finally we use the fact some seven codelines $L_1,\dots ,L_7$ are contained in a $PG(4,2).$
Apply our argument to the following triples of codelines:
\begin{itemize}
\item
$\lbrace L_1,L_2,L_3\rbrace$ yielding $v(\lbrace L_1,L_2,L_3\rbrace)$ which we can choose to have nonzero entries in coordinates $1,2$ (and possibly $3$),
\item
$\lbrace L_2,L_3,L_4\rbrace$ where we choose
notation such that $2$ is in the support of $v(\lbrace L_2,L_3,L_4\rbrace),$ and
\item
$\lbrace L_3,L_4,L_5\rbrace$
\end{itemize}

This yields the contradiction $L_6=L_7.$ Proposition~\ref{pureprop} has been proved.

\section{The structure of the proof}
\label{structuresection}

Let $C$ be a $[[13,5,4]]$ quantum code, described by a set of lines $L_1,\dots ,L_{13}$
in the ambient space $U$ (a $PG(7,2)$). We know that the strength is $3.$ Let $e_1,\dots ,e_8$ be a basis of the
underlying vector space $V$ and choose $L_1=\langle e_1,e_2\rangle ,~L_2=\langle e_3,e_4\rangle .$ Consider the factor space $V/\langle e_1,e_2,e_3,e_4\rangle$ and the
corresponding $PG(3,2)$ which we call $\Pi.$ We work in $U$ and in the factor
space $\Pi .$ Because of strength $3$ each codeline $L_i,i>2$ defines a line in $\Pi.$

\begin{Definition}
\label{wdef}
Let $g$ be a line of $\Pi$ (a $PG(3,2)$). Define the {\bf weight} $w(g)$ of $g$
as $2$ less than the number of codelines contained in the preimage of $g.$
For points $P$ and planes $E$ of $\Pi$ define
$$w(P)=\sum_{P\in g}w(g),~w(E)=\sum_{g\subset E}w(g).$$
\end{Definition}

The geometric meaning of $w(P)$ and $w(E)$ is as follows: $w(P)+2$ is the number of codelines
which meet the preimage of $P$ (a $PG(4,2)$) nontrivially,
$w(E)+2$ is the number of codelines contained in the preimage of $E$ (a hyperplane $PG(6,2)$). 

\begin{Proposition}
\label{factorspaceprop}
We have $\sum_gw(g)=11$ where the sum is over all lines $g$ of $\Pi.$
For each line $h$ of $\Pi$ the number of lines of our multiset which intersect $h$ nontrivially is odd. 
\end{Proposition}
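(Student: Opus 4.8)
The plan is to prove the two assertions separately, both resting on a careful count of codelines "through" preimages of points and planes of $\Pi$.

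For the first assertion, I would begin by recalling that $\Pi = PG(3,2)$ has exactly $35$ lines, and that $w(g)+2$ counts the codelines in the preimage of $g$ (a $PG(5,2)$ containing $L_1$ and $L_2$). The two base codelines $L_1,L_2$ lie in the preimage of \emph{every} line of $\Pi$, while each of the remaining $11$ codelines $L_3,\dots,L_{13}$ maps to a well-defined line of $\Pi$ (by strength $3$ no $L_i$ with $i>2$ can meet $\langle e_1,e_2,e_3,e_4\rangle$ nontrivially, so its image is genuinely a line). Hence $\sum_g (w(g)+2)$ would overcount, and the clean way is: $\sum_g w(g)$ counts, for each line $g$ of $\Pi$, the number of codelines $L_i$ with $i>2$ whose image is $g$, summed over all $g$. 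Since each such $L_i$ has exactly one image line, this sum is exactly $11$. I would write this out as the first paragraph of the proof.

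For the second assertion, fix a line $h$ of $\Pi$ and let $N(h)$ be the number of codelines (among all $13$) meeting the preimage $P^*$ of $h$—wait, more precisely $h$ is a line of $\Pi$, whose preimage is a $PG(5,2)$, call it $W$. I want to show the number of our $13$ codelines meeting $W$ nontrivially is odd. The key tool is the purity established in Proposition~\ref{pureprop}: $C$ has strength $3$, so the dual code $C^\perp$ has minimum distance $\geq 4$, equivalently any $\leq 3$ codelines are in general position; moreover for a \emph{self-orthogonal} code contained in its dual, one gets parity constraints from the symplectic/inner-product relations. The cleanest route: a hyperplane $W$ of codimension $2$ in $U=PG(7,2)$ corresponds to a $2$-dimensional subspace of the dual space, i.e.\ to three hyperplanes $H_1,H_2,H_3$ of $U$ with $W = H_1\cap H_2$ and $H_3$ the third member of the pencil. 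A codeline $L_i$ avoids $W$ iff it avoids all three $H_j$ or exactly one... I would instead argue via the fact that $w(h)+2$ is the number of codelines \emph{contained} in $W$, and use the strength-$3$ condition to get that the codelines meeting $W$ but not contained in it contribute in pairs relative to the residual geometry, making the total meet-count $\equiv w(h) \pmod 2$; then show $w(h)$ is odd. Actually the slickest formulation: use that $C$ being its own subcode of $C^\perp$ means evaluating any codeword on the functional defining $W$ gives a parity check, and "number of codelines meeting $W$" relates to the weight distribution of $C$ restricted to a hyperplane complement.

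The main obstacle, as I see it, is pinning down exactly this parity argument for the second statement: translating "a line of $\Pi$ is met by an odd number of our $13$ lines" into a statement about $C$ or $C^\perp$ and then invoking a known identity. I expect the right move is to use the first part together with the observation that a fixed line $h\subset\Pi$ meets a varying line $g\subset\Pi$ nontrivially for exactly $15$ of the $35$ choices of $g$ (since in $PG(3,2)$ two distinct lines either meet in a point or are skew, and each line meets $15$ others in a point plus itself), so $\sum_{g\cap h\neq\emptyset} w(g) = \sum_g w(g) - \sum_{g\cap h=\emptyset} w(g)$; combined with $\sum_g w(g)=11$ being odd and a parity analysis of the skew-to-$h$ lines one deduces the meet-count is odd. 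I would first verify the intersection combinatorics of $PG(3,2)$, then feed in the purity of $C$ to rule out the even-count configurations, citing Lemma~\ref{obviouslemma} and the fact that $C^\perp$ has no words of weight $\leq 3$.
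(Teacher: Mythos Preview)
Your first paragraph is correct and matches the paper: each of $L_3,\dots,L_{13}$ projects to exactly one line of $\Pi$, so $\sum_g w(g)=11$.

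For the second assertion, however, there is a genuine gap. You circle around the right idea when you mention that $C\subseteq C^\perp$ should give a parity check, but you never commit to it, and the approach you settle on in your last paragraph cannot succeed. The combinatorial decomposition $\sum_{g\cap h\neq\emptyset}w(g)=11-\sum_{g\cap h=\emptyset}w(g)$ tells you nothing about parity unless you already know the parity of the skew contribution, and there is no way to extract that from purity or from Lemma~\ref{obviouslemma}. Purity (strength $3$) only says that any three codelines are in general position; it carries no mod~$2$ information. Lemma~\ref{obviouslemma} is vacuous here for the same reason.

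The missing ingredient is the direct geometric translation of self-orthogonality: the condition $C\subseteq C^\perp$ under the symplectic form is equivalent to the statement that \emph{every secundum $S$ of $U$ meets an odd number of the $13$ codelines} (see the reference to \cite{quantgeom} in the paper). Once you have this, the second assertion is immediate: the preimage $W$ of a line $h\subset\Pi$ is a secundum; a codeline $L_i$ with $i>2$ meets $W$ exactly when its image line meets $h$; and $L_1,L_2\subset W$ contribute $2$. Hence $2+\sum_{g\cap h\neq\emptyset}w(g)$ is odd, which is the claim. This is the paper's argument, and it is where the ``quantum'' in the quantum condition enters --- not through purity, but through self-orthogonality.
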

\begin{proof}
We think of the multiplicities $w(g)$ as defining a multiset, clearly of $11$ lines.
Let $h$ be a line of $\Pi.$ 
Its preimage under
the canonical mapping onto $\Pi$ is a secundum of the ambient space $U.$ 
The orthogonality condition of Definition~\ref{quantumcodedef} translates as follows
in geometric terms: for each secundum $S$ of $U$ the number of codelines meeting $S$
nontrivially is odd (see also~\cite{quantgeom}). Applying this to the preimage of
line $h$ yields our claim.
\end{proof}

We refer to the condition of Proposition~\ref{factorspaceprop} as the {\bf quantum condition.} Observe that in the quantum condition the sum is over all lines, including $h$ itself:
each of the $35$ lines of $\Pi$ gives a condition, and the sum is over all $g.$
\par
As $C$ is pure sets of strength $3$ play an important role.

\section{Sets of strength $3$}
\label{nmsetsection}

\begin{Definition}
\label{nmsetdef}
A set of objects in a projective space has {\bf strength} $3$ if any subset
of three of those objects are in general position.
An $(n,m)$-set is a set of strength $3$ consisting of $n$ lines and $m$ points.
\end{Definition}

\begin{Proposition}
\label{hyperplaneprop}
Assume $H$ is a hyperplane in $U$ containing precisely $n$ codelines.
Then $H$ meets the union of the codelines in an $(n,13-n)$-set whose points meet each hyperplane $S$ of $H$ in a cardinality whose parity is different from $n.$
\end{Proposition}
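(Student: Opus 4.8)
The plan is to describe explicitly how the hyperplane $H$ meets each of the $13$ codelines, and then to read off both assertions from the strength of $C$ together with the orthogonality condition already used in the proof of Proposition~\ref{factorspaceprop}.

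First I would record the auxiliary fact that the $13$ codelines are pairwise disjoint: if two of them met, their span would be a plane, and adjoining any third codeline (one always exists, since there are $13$) would give a span contained in a $PG(4,2)$, contradicting strength $3$. Since $H$ is a hyperplane of $U$, a codeline not contained in $H$ meets $H$ in exactly one point. Hence $H$ meets the union of the codelines in the $n$ codelines lying inside $H$ together with the $13-n$ points $P_k=L_k\cap H$ arising from the codelines $L_k\not\subseteq H$. Pairwise disjointness then forces these $13-n$ points to be distinct from one another and to avoid the $n$ codelines contained in $H$, so we genuinely obtain $n$ lines and $13-n$ points.

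Next I would verify that this configuration has strength $3$. For each point $P_k$ choose the spanning vector lying in $H$ taken from a basis $p_k,q_k$ of $L_k$ with $q_k\notin H$. Then the column vectors attached to any three objects of the configuration form a sub-collection of the six spanning vectors of the three underlying codelines in $U$, and these six vectors are linearly independent because $C$ has strength $3$; hence any sub-collection is independent, which is exactly general position. I expect this to be the fussiest step, since one must run through the cases of three kept lines, of two lines and a point, of one line and two points, and of three points; but each collapses to the same linear-independence statement, so there is no real obstacle.

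Finally, for the parity statement let $S$ be a hyperplane of $H$. Then $S$ has codimension $2$ in $U$, i.e.\ it is a secundum of $U$, so by the orthogonality condition for quantum codes (as in the proof of Proposition~\ref{factorspaceprop}) the number of codelines meeting $S$ nontrivially is odd. Every one of the $n$ codelines inside $H$ meets the hyperplane $S$ of $H$ nontrivially, whereas a codeline $L_k\not\subseteq H$ meets $S$ precisely when $P_k\in S$, since $L_k\cap S=(L_k\cap H)\cap S=\{P_k\}\cap S$. Therefore the number of codelines meeting $S$ equals $n$ plus the number of our $13-n$ points lying in $S$, and the oddness of this total says exactly that the number of configuration points in $S$ has parity different from that of $n$, as claimed.
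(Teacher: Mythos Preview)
Your argument is correct and follows the same line as the paper: codelines not in $H$ meet $H$ in a single point, the resulting configuration inherits strength~$3$ from the codelines, and the parity statement comes from the secundum orthogonality condition applied to $S\subset H$. The paper's own proof is considerably terser---it does not spell out the disjointness of codelines or the case-by-case verification of strength~$3$---but your added detail is all sound and fills in exactly what the paper leaves implicit.
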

\begin{proof}
Each of the $13$ codelines either is contained in $H$ or it meets $H$ in a point. This
proves the first part. Let $S$ be a hyperplane of $H.$ Then $S$ is a secundum of $U$ and therefore meets an odd number of codelines. As $S$ does meet the $n$ codelines
contained in $H$ the second statement follows.
\end{proof}

In order to obtain bounds on $w(P),w(g),w(E)$ consider the
corresponding preimage spaces ($PG(4,2),PG(5,2),VPG(6,2),$ respectively)
with their $(n,m)$-sets formed by the intersection with codelines.

\begin{Lemma}
\label{V5lemma}
A $(2,m)$-set of strength $3$ in $PG(4,2)$ has $m\leq 4.$
All these sets are embedded in a uniquely determined $(2,4)$-set.
\end{Lemma}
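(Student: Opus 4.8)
The plan is to work in $PG(4,2)$ with two fixed skew lines $L_1=\langle e_1,e_2\rangle$ and $L_2=\langle e_3,e_4\rangle$, together with a set $X$ of $m$ further points, such that any three objects among $\{L_1,L_2\}\cup X$ are in general position. The strength-$3$ condition has three consequences I would extract first: (i) no point of $X$ lies on $L_1$ or $L_2$; (ii) no point of $X$ lies in the plane $\langle L_1,P\rangle$ for $P$ another point of $X$ — equivalently, since $L_1$ and any two points $P,Q\in X$ must span all of $PG(4,2)$, the points $P,Q$ together with $L_1$ are independent, and likewise with $L_2$; (iii) no three points of $X$ are collinear. I would record these as explicit linear-independence statements in coordinates: writing a point $P\in X$ as $P=(p_1,p_2,p_3,p_4,p_5)$, condition (i) forces $(p_3,p_4,p_5)\neq 0$ and $(p_1,p_2,p_5)\neq 0$, and in fact the whole point is essentially controlled by its image in the quotients $V/L_1$ and $V/L_2$.

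The key step is to project $X$ from $L_1$ onto the plane $PG(2,2)$ corresponding to $V/\langle e_1,e_2\rangle$ (coordinates $(p_3,p_4,p_5)$). Condition (ii) with $L_1$ says that any two points of $X$ have distinct, nonzero, and "independent enough" images; more precisely $L_1$ plus two points of $X$ spanning $PG(4,2)$ means the two images in this $PG(2,2)$ are distinct and nonzero, and condition (iii) forces the images of any three points of $X$ to be non-collinear in this $PG(2,2)$. But $PG(2,2)$ has only $7$ points, and an arc in $PG(2,2)$ (no three collinear) has at most $4$ points; one of the three quotient-images (from $L_1$, from $L_2$, or combined) will be injective, so $m\leq 4$ follows. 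I would choose whichever projection is cleanest — most likely projection from $L_1$, after checking injectivity follows from strength $3$ — and then invoke the classical fact that a maximal arc (hyperoval) in $PG(2,2)$ is a unique $4$-point set up to the action of its stabilizer.

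For the uniqueness and embedding claim, the plan is: once $m=4$, the four images in $PG(2,2)$ form the unique hyperoval there (all $4$ points of weight... i.e. the complement of a line, or rather the standard $4$-arc), and one then lifts back. The lift is constrained: each point $P\in X$ is determined by its image modulo $L_1$ up to adding an element of $L_1=\langle e_1,e_2\rangle$, i.e. up to a $2$-dimensional ambiguity, but the symmetric condition coming from projection from $L_2$ pins down the two "$L_1$-coordinates" $(p_1,p_2)$ as a function of $(p_3,p_4,p_5)$. Running this for all four points and using the residual freedom to act by $PGL$ fixing $L_1$ and $L_2$, I expect to land on a single orbit, hence a uniquely determined $(2,4)$-set into which every $(2,m)$-set embeds. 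I would finish by exhibiting one explicit $(2,4)$-set in coordinates and remarking that the stabilizer of $\{L_1,L_2\}$ acts transitively on the configurations, so the $(2,4)$-set is unique and every smaller one is a subset of it.

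The main obstacle I anticipate is \emph{not} the bound $m\leq 4$, which is a short arc argument, but the uniqueness/embedding part: one has to check carefully that the lifting from $PG(2,2)$ back to $PG(4,2)$ is rigid — that strength $3$ (via both projections simultaneously) leaves no essential freedom — and that the various case choices in picking coordinates genuinely collapse to one configuration. This is a finite but slightly delicate verification, and it is where I would spend most of the effort, probably by fixing a normal form for $L_1,L_2$ and the first two or three points of $X$ and then showing the remaining points are forced.
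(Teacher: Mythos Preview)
The paper's ``proof'' is essentially just the single sentence exhibiting the explicit $(2,4)$-set
\[
e_5,\quad e_1+e_3+e_5,\quad e_2+e_4+e_5,\quad e_1+e_2+e_3+e_4+e_5,
\]
with no further argument; so your approach is necessarily more detailed than theirs. Your projection idea is a good one and does yield $m\le 4$, but the reasoning as written has a real gap.

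First a slip: $L_1$ together with two points $P,Q$ spans at most a $PG(3,2)$, not all of $PG(4,2)$. The strength-$3$ condition on $\{L_1,P,Q\}$ says $\dim\langle L_1,P,Q\rangle=4$, which in the quotient $V/L_1$ just means $\bar P\neq\bar Q$; that is, the projection $X\to PG(2,2)$ is injective.

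The substantive gap is your claim that condition (iii) (no three points of $X$ collinear in $PG(4,2)$) forces their images in $PG(2,2)$ to be non-collinear. Projection from a line can certainly create collinearities, and strength $3$ alone does not rule this out. What actually gives the bound is the one triple you did not invoke, namely $\{L_1,L_2,P\}$: general position here means $P\notin\langle L_1,L_2\rangle$, i.e.\ $p_5=1$. Then in $V/L_1$ every image lies in the affine part $\{p_5=1\}$, which has exactly four points (the complement of the line $\bar L_2$), and injectivity gives $m\le 4$ at once --- no arc argument is needed. As a bonus, once $p_5=1$ for all points of $X$, any three of them sum to a vector with last coordinate $1\neq 0$, so both condition (iii) and the ``arc'' property in the quotient are automatic rather than hypotheses to be used.

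For the uniqueness/embedding part your plan is fine and in fact easier than you fear. With $p_5=1$ fixed, a $(2,4)$-set is precisely a bijection $(p_3,p_4)\mapsto(p_1,p_2)$ between two copies of $\ef_2^2$. The stabilizer of the ordered pair $(L_1,L_2)$ contains both the block-diagonal $GL_2\times GL_2$ and the shears $e_5\mapsto e_5+v$ for $v\in L_1\oplus L_2$; together these act as $AGL_2(\ef_2)\cong S_4$ on each side, hence transitively on all such bijections. So all $(2,4)$-sets are projectively equivalent, and every $(2,m)$-set (having its images among the same four points of each quotient) embeds in one. This is exactly the rigidity you anticipated; it just comes from the $\{L_1,L_2,P\}$ condition rather than from the two separate projections.
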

\begin{proof}
The lines are without restriction
$L_1=\langle e_1,e_2\rangle , L_2=\langle e_3,e_4\rangle ,$
the points of the $(2,4)$-set of strength $3$ can be chosen as
$$e_5,~e_1+e_3+e_5,~e_2+e_4+e_5,~e_1+e_2+e_3+e_4+e_5.$$
\end{proof}

Of particular importance are
the hyperoval in $PG(2,4)$ and the $[7,3.5,4]_4$-codes. 

\begin{Lemma}
\label{HOlemma}
An $(n,0)$-set in $PG(5,2)$ has $n\leq 6.$
For each $n$ it is uniquely determined. They are all embedded in the
uniquely determined $(6,0)$-set, which we call the {\bf binary hyperoval.}
Consider $(n,m)$-sets in $V_6.$
If $n=6,$ then $m=0.$ If $n=5,$ then $m\leq 2.$
If $n=4,$ then $m\leq 4.$
\end{Lemma}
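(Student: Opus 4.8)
The statement bundles together several claims about strength-$3$ sets of lines and points in small binary projective spaces, and the strategy is to reduce everything to the quaternary coding dictionary already invoked in the paper: a set of $n$ lines of strength $3$ in $PG(r,2)$ whose span is all of $PG(r,2)$ is exactly the column set of the generator matrix of an additive quaternary $[n,(r+1)/2]_4$-code of dual distance $\geq 4$, and a point counts as a "degenerate line" so that an $(n,m)$-set of strength $3$ corresponds to such a code together with $m$ extra columns of a special shape. So the plan is: first handle the pure line case $(n,0)$ in $PG(5,2)$, then bootstrap to the mixed cases in $V_6=PG(6,2)$ (which here is one dimension higher, matching the preimage of a plane $E$ of $\Pi$).

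For the $(n,0)$ claim: an $(n,0)$-set spanning $PG(5,2)$ is the generator matrix of a $[n,3]_4$ additive code with $d^\perp\geq 4$, i.e.\ strength $3$; the non-existence of a $[7,3,4]_4$-code (equivalently, by the Griesmer-type bound / the argument the paper already cites for $[7,4,4]$) forces $n\leq 6$. For $n=6$ one is looking at a $[6,3,4]_4$ code with dual distance $4$; such a code is self-dual with respect to the Hermitian form and is unique — this is the classical hyperoval of $PG(2,4)$ re-expressed over $\ef_2$, giving the "binary hyperoval" in $PG(5,2)$. Uniqueness for $n<6$ then follows because any $(n,0)$-set with $n\leq 5$ must embed into the unique $(6,0)$-set: take the $n$ lines, observe that strength $3$ plus a dimension count forces the complementary structure, and extend greedily — there is, up to the projective stabilizer of the configuration, exactly one way to add a line keeping strength $3$, so the embedding into the hyperoval is forced and unique. (Concretely one can just exhibit the six lines, e.g.\ as the images under a Singer cycle, and check by inspection that deleting lines from it and re-coordinatizing recovers every smaller set.)

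For the mixed cases in $V_6$: an $(n,m)$-set of strength $3$ spanning $PG(6,2)$ corresponds to a $[n+m,3.5]_4$ additive code with dual distance $\geq 4$, where $m$ of the columns are "points." If $n=6$, the six lines already span a $PG(5,2)$ (a hyperplane, since they are the binary hyperoval there and cannot be extended), so any further point would have to lie in that $PG(5,2)$ — but then together with two of the six lines it violates strength $3$ (the point, being in $PG(5,2)$, is spanned by the hyperoval configuration in a way that collapses three objects); hence $m=0$. For $n=5$: the five lines span either $PG(5,2)$ or a smaller space; the generic case spans $PG(5,2)$, a hyperplane $S$ of $V_6$, and then each extra point is determined mod $S$, and strength $3$ against pairs of the five lines plus the pairing conditions limit the number of admissible residues to $2$, and one checks a $[7,3.5,4]_4$ code bound to rule out $m\geq 3$. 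For $n=4$: here Lemma \ref{V5lemma} is the engine — four lines of strength $3$ live (up to the spanned subspace) inside structures controlled by the $(2,4)$-set analysis, and a counting argument on how many points can be added while every triple stays in general position caps $m$ at $4$; the bound $m\leq 4$ should again be equivalent to the non-existence of an $[8,4,4]_4$-type object.

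The main obstacle I expect is the uniqueness part of the $(n,0)$ claim and the sharpness of the $n=5$, $m\leq 2$ bound in $V_6$: non-existence bounds ($n\leq 6$, etc.) come essentially for free from the quoted quaternary code non-existence results, but pinning down that every smaller configuration embeds in a \emph{unique} maximal one, and controlling exactly how many points can be adjoined in the mixed cases, requires either an explicit coordinatization and a finite case check, or a careful orbit-counting argument under the stabilizer of the hyperoval in $P\Gamma L$. I would organize the write-up around an explicit model of the binary hyperoval and verify the mixed-case bounds by intersecting with it.
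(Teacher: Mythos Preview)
Your plan contains a basic misreading that derails the second half. In the lemma, $V_6$ is the $6$-dimensional binary vector space, so the $(n,m)$-sets under discussion live in $PG(5,2)$, not in $PG(6,2)$. This is the preimage of a \emph{line} $g$ of $\Pi$, not of a plane $E$; the $PG(6,2)$ case is handled separately via Proposition~\ref{atmost7prop} and Proposition~\ref{71classiprop}. Once you place the mixed cases in $PG(5,2)$, your arguments for $n=6,5,4$ no longer apply: there is no ``hyperplane $S$ of $V_6$'' spanned by five lines with a transverse direction left over, and invoking $[7,3.5,4]_4$ or $[8,4,4]_4$ codes is off by one ambient dimension.

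The paper's argument is much more direct than your coding-theoretic reductions, and in fact yields all the claims simultaneously. One fixes $L_1,L_2,L_3$ in the standard way, counts that exactly $27$ transversal points extend them to a $(3,1)$-set, observes that (up to the stabilizer) there is a unique fourth line $L_4$, and then checks that exactly six points extend $L_1,\dots,L_4$ to a $(4,1)$-set --- and these six points are precisely the points of the two remaining hyperoval lines $L_5,L_6$. From this single computation everything follows: uniqueness of the $(n,0)$-sets and of the hyperoval, and the bounds on $m$ (since the candidate points for an $(n,m)$-set with $n\geq 4$ are forced to lie on the leftover hyperoval lines, any three on one such line are collinear, giving $m\leq 4,2,0$ for $n=4,5,6$ respectively). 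Your proposal to get $n\leq 6$ via the nonexistence of a quaternary $[7,4,4]$-code is fine as an alternative for that one inequality, but it does not give uniqueness or the mixed bounds; for those you should carry out the explicit extension-point count in $PG(5,2)$.
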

\begin{proof}
The first $3$ lines can be chosen as usual:
$$L_1=\langle e_1,e_2\rangle , L_2=\langle e_3,e_4\rangle , L_3=\langle e_5,e_6\rangle.$$
There are exactly $27$ points, the transversal 
points, each forming a $(3,1)$-set with $\lbrace L_1,L_2,L_3\rbrace.$
Then $L_4=\langle e_1+e_3+e_5,e_2+e_4+e_6\rangle$ is the essentially unique 
fourth line. There remain $6$ points each forming a $(4,1)$-set
together with $L_1,\dots ,L_4.$ 
These are exactly the six points on the remaining lines
$$L_5=\langle e_1+(e_3+e_4)+e_6,e_2+e_3+(e_5+e_6)\rangle ,
L_6=\langle e_1+e_4+(e_5+e_6),e_2+(e_3+e_4)+e_5\rangle$$
of the binary hyperoval. The uniqueness statement follows.
\end{proof}

We chose the term {\bf binary hyperoval} as the $(6,0)$-set in $PG(5,2)$ is
the binary image of the hyperoval in $PG(2,4).$ It is well known that the
hyperoval has the symmetric group $S_6$ as its group of automorphisms.
The automorphism group of the binary hyperoval has order $3\times 6!$ where the
additional factor $3$ stems from the multiplicative group of the field.
\par
As for the case of $(n,m)$-sets in $PG(6,2)$ we use earlier work in relation to
additive $[7,3.5,4]_4$-codes, see~\cite{addZ4,addJCTA08}.

\begin{Proposition}
\label{atmost7prop}
There is no $(7,0)$-set in $PG(5,2)$ and no
$(8,0)$-set in $PG(6,2).$ There are precisely three
non-equivalent $(7,0)$-sets in $PG(6,2).$ 
Exactly one of them defines a self-dual code with respect to the Euclidean
form (the dot product).
\end{Proposition}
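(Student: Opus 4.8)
The plan is to translate the three statements into the language of additive quaternary codes and to reduce each of them to the structural facts already at hand (Lemmas~\ref{V5lemma} and~\ref{HOlemma}) together with the known classification of additive $[7,3.5,4]_4$-codes. The non-existence of a $(7,0)$-set in $PG(5,2)$ I would extract from the proof of Lemma~\ref{HOlemma}: in a hypothetical $(7,0)$-set pick four of its lines, say $L_1,\dots,L_4$; since the whole set has strength $3$, each of the remaining three lines completes $\{L_1,\dots,L_4\}$ to a strength-$3$ set of five lines. But after normalising $L_1,\dots,L_4$ as in the proof of Lemma~\ref{HOlemma}, the lines having this completion property are exactly the two remaining lines of the binary hyperoval --- their six points being the only points that form a $(4,1)$-set with $L_1,\dots,L_4$ --- so three distinct lines would have to lie in a two-element set, which is absurd. (In particular this re-proves that the binary hyperoval is the unique maximal $(n,0)$-set in $PG(5,2)$.)

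For $PG(6,2)$ I would first record the dictionary: a $(7,0)$-set in $PG(6,2)$ is precisely a generator matrix of an additive $[7,3.5]_4$-code $C$ whose seven codelines span the ambient space and whose dual distance is at least $4$; by the Singleton bound $C^{\perp}$ is then an additive $[7,3.5,4]_4$-code, and $C\mapsto C^{\perp}$ is an equivalence-preserving bijection. Hence the number of $(7,0)$-sets in $PG(6,2)$ equals the number of equivalence classes of additive $[7,3.5,4]_4$-codes, which is $3$ by~\cite{addZ4,addJCTA08}. With a generator matrix for each of the three in hand, it is a finite verification over $\ef_4^7$ to determine which of them is self-dual with respect to the Euclidean form (the dot product); exactly one is, and this gives the last assertion.

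For the non-existence of an $(8,0)$-set in $PG(6,2)$ I would argue that any seven of its eight lines must span $PG(6,2)$ --- otherwise they would span only a $PG(5,2)$ and form a $(7,0)$-set in $PG(5,2)$, contradicting the previous step --- so that every seven of the eight lines constitute one of the three $(7,0)$-sets in $PG(6,2)$ classified above. It then remains to check that none of these three configurations can be extended by an eighth line while keeping strength $3$; once they are written down explicitly this is a routine finite computation, already implicit in the optimality statements of~\cite{addZ4,addJCTA08}.

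The one ingredient that is not elementary projective geometry is the classification of additive $[7,3.5,4]_4$-codes into three equivalence classes; this rests on the earlier (computer-assisted) work of~\cite{addZ4,addJCTA08}, and it simultaneously supplies the count of $(7,0)$-sets in $PG(6,2)$, the explicit configurations used in the Euclidean-self-duality check, and the data needed to exclude an $(8,0)$-set. I expect that to be the main obstacle; the geometric reductions above and the small final verifications are routine once the three configurations are available.
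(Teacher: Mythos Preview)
Your proposal is correct, but for the two non-existence claims you argue differently from the paper. The paper's proof is a two-line appeal to coding theory: a $(7,0)$-set in $PG(5,2)$ would dualise to an additive $[7,4,4]_4$-code, and an $(8,0)$-set in $PG(6,2)$ to an additive $[8,4.5,4]_4$-code, neither of which exists; the classification into three $(7,0)$-sets is then quoted from~\cite{addZ4,addJCTA08,DP,HanKim}. Your route is more geometric: for $PG(5,2)$ you squeeze the extra lines into the six residual points of the binary hyperoval (a clean pigeonhole, and it works because those six points lie on exactly the two lines $L_5,L_6$), and for $PG(6,2)$ you reduce the $(8,0)$-question to the non-extendability of the three classified $(7,0)$-sets. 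The advantage of your approach is that the first non-existence becomes self-contained, not needing the external fact that $[7,4,4]_4$-codes do not exist; the cost is that your $(8,0)$-argument still requires the full classification plus an additional extension check, whereas the paper dispatches it with a single known non-existence result. For the classification itself and the Euclidean-self-duality assertion the two approaches coincide: both rest on the cited computer-assisted work.
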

\begin{proof}
A $(7,0)$-set in $V_6$ would define an additive $[7,4,4]_4$-code. In the same way an $(8,0)$-set in $V_7$
would lead to an $[8,4.5,4]_4$-code. Those codes do not exist.
\end{proof}

The classification of $(7,0)$-sets in $PG(6,2)$ has been carried
out independently several times, most recently in Danielsen-Parker~\cite{DP} and
Han-Kim~\cite{HanKim}. 

\begin{Proposition}
\label{71classiprop}
Consider the three $(7,0)$-sets in $PG(6,2).$ The number $c$ of points that complete them to a $(7,1)$-set is
$c=1,$ $c=2$ and $c=8,$ respectively.
The case of $8$ extension points occurs when the code generated
by the $(7,0)$-set is self-dual.
This $(7,0)$-set can be extended to a uniquely determined $(7,7)$-set and to a
$(7,6)$-set which is uniquely determined up to projectivity.
\end{Proposition}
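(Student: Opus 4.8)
The plan is to analyze each of the three $(7,0)$-sets separately, using the coordinate descriptions available from the classification in \cite{DP,HanKim} (or reconstructing them via the uniqueness arguments behind Proposition~\ref{atmost7prop}). For a fixed $(7,0)$-set $\{L_1,\dots,L_7\}$ in $PG(6,2)$, a point $P$ completes it to a $(7,1)$-set precisely when $P$ lies on none of the seven codelines and, together with any two of the $L_i$, forms a set of strength $3$ — equivalently, $P$ together with all seven lines is still in general position triplewise. First I would count, for each of the three orbits, the points $P\in PG(6,2)$ that are ``free'' in this sense; this is a finite check over the $127$ points modulo the (large) automorphism group, and it should yield the three values $c=1,2,8$. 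The correlation with self-duality is expected to drop out of the same computation: the unique self-dual $(7,0)$-set from Proposition~\ref{atmost7prop} is the one whose stabilizer acts most transitively, and its extension points form a single orbit of size $8$.

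Next I would treat the self-dual case in detail, since that is where the richer extensions $(7,6)$ and $(7,7)$ live. Having the $8$ extension points in hand, the task is to determine which subsets of them can be adjoined \emph{simultaneously} while preserving strength $3$: a collection $\{P_1,\dots,P_s\}$ of extension points yields a $(7,s)$-set iff every pair $P_a,P_b$ together with any single $L_i$ is in general position, i.e.\ no line $\langle P_a,P_b\rangle$ meets any $L_i$, and additionally the $P_a$ are pairwise in general position with pairs of lines. I expect the $8$ points to carry a natural combinatorial structure (acted on by the stabilizer, which by the self-duality and the order computations nearby should be a group of order divisible by a large power) that forbids adding all $8$ but allows a maximal compatible family of size $7$. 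Showing that this size-$7$ family is unique (as a $(7,7)$-set) and that the codimension-one sub-configurations give a single projective class of $(7,6)$-set is then a matter of exhibiting the explicit points and invoking the transitivity of the automorphism group on the relevant flags.

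Concretely, I would pick the standard coordinates $L_i=\langle a_i,b_i\rangle$ for the self-dual $(7,0)$-set — these can be taken from the binary image of a well-known quaternary $[7,3.5,4]_4$ self-dual code, so that the ambient $V_7$ is $\ef_2^{14}/(\text{something})$ in a form where the symplectic/Euclidean duality is transparent — and then just list the $8$ candidate points, verify the strength-$3$ incidences by a short linear-algebra check, and read off the maximal families. The uniqueness up to projectivity of the $(7,6)$-set follows because any two $(7,6)$-sets, after completing their unique $(7,0)$-part, are related by an automorphism of that $(7,0)$-set, and this automorphism group acts transitively on $6$-subsets of the $8$ extension points in the required way.

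The main obstacle I anticipate is the simultaneous-compatibility analysis of the $8$ extension points: counting them individually (the numbers $1,2,8$) is a routine orbit computation, but ruling out a $(7,8)$-set and pinning down exactly which $7$-subsets survive requires understanding the incidence pattern among those $8$ points and the $7$ lines well enough to see why one point must always be excluded — and, for the uniqueness claims, why the stabilizer acts transitively on the admissible configurations. I would handle this by identifying the $8$ points with a recognizable geometric object (I expect them to be, up to the obvious symmetry, the points of a Fano-type or affine structure on which $PGL$ or an affine group acts), at which point both the non-existence of the $(7,8)$-set and the uniqueness up to projectivity of the $(7,7)$- and $(7,6)$-sets become transparent group-theoretic statements.
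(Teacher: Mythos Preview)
Your plan matches the paper's approach: the paper also declares the count $c\in\{1,2,8\}$ a computer result, writes down explicit coordinates for the self-dual $(7,0)$-set and its eight extension points, and then uses the automorphism group to deduce the uniqueness statements. The one concrete insight you are missing, which resolves exactly the ``main obstacle'' you flag, is this: the automorphism group of the self-dual $(7,0)$-set has order $42$ (not large), it \emph{fixes} one of the eight extension points $P_0$, and the remaining seven points $E$ arise as the third points on the lines from $P_0$ to $L_i\cap H$ (where $H$ is the hyperplane $x_7=0$). This cone description immediately shows that every secant $\langle P_0,e\rangle$ with $e\in E$ meets some $L_i$, so $P_0$ cannot coexist with any other extension point in a strength-$3$ set; hence the unique $(7,7)$-set is $E$ itself, and every $(7,6)$-set is $E$ minus a point. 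Your claim that the group is transitive on $6$-subsets of the eight extension points is therefore not quite right and not what is needed: it fixes $P_0$ and is transitive on $E$, which suffices.
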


\begin{proof}
This is a computer result.
The self-dual code is the one with $8$ extension points.
Here it is:

$$\left(\begin{array}{c|c|c|c|c|c|c}
L_1 & L_2 & L_3 & L_4 & L_5 & L_6 & L_7 \\
00 & 00 & 01 & 00 & 01 & 01 & 01 \\
01 & 00 & 00 & 01 & 00 & 01 & 01 \\
01 & 01 & 00 & 00 & 01 & 00 & 01 \\ 
00 & 00 & 10 & 10 & 10 & 00 & 10 \\
10 & 00 & 00 & 10 & 10 & 10 & 00 \\
00 & 10 & 00 & 00 & 10 & 10 & 10 \\ 
11 & 11 & 11 & 11 & 11 & 11 & 11 \\
 \end{array}\right)$$

The eight extension points are $P_0=(0:0:0:0:0:0:1)$ and the columns of

$$\left(\begin{array}{c|c|c|c|c|c|c}
0 & 0 & 1 & 0 & 1 & 1 & 1 \\
1 & 0 & 0 & 1 & 0 & 1 & 1 \\
1 & 1 & 0 & 0 & 1 & 0 & 1 \\ 
0 & 0 & 1 & 1 & 1 & 0 & 1 \\
1 & 0 & 0 & 1 & 1 & 1 & 0 \\
0 & 1 & 0 & 0 & 1 & 1 & 1 \\ 
1 & 1 & 1 & 1 & 1 & 1 & 1 \\
 \end{array}\right)$$

forming a set $E.$ 
The $(7,0)$-set has an automorphism group $G$ of order $42$ 
which fixes $P_0,$ preserves the hyperplane $H$ with equation $x_7=0$
and acts transitively on $E.$ 
Consider the cone with vertex $P_0$ consisting of the lines from $P_0$ 
to the points
$L_i\cap H.$ The third points on those lines make up $E.$
It follows that the uniquely determined $(7,7)$-set is defined by point set
$E$ and the essentially uniquely determined $(7,6)$-set is obtained by omitting
one point from $E.$
\end{proof}

\section{The weights in the factor space $PG(3,2)$}
Consider the weights $w(g)$ of lines in $\Pi=PG(3,2)$ and the
induced weights $w(P),w(E)$ on points and planes.

\begin{Lemma}
\label{wboundlemma}
For points, lines, planes of $\Pi$ we have 
$w(P)\leq 4, w(g)\leq 3, w(E)\leq 4.$
\end{Lemma}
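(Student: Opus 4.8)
The plan is to translate each weight bound into a statement about $(n,m)$-sets in the relevant preimage space and then invoke the classification lemmas already proved. For a point $P$ of $\Pi$, its preimage is a $PG(4,2)$ containing the two lines $L_1=\langle e_1,e_2\rangle$ and $L_2=\langle e_3,e_4\rangle$; the codelines meeting this $PG(4,2)$ nontrivially either lie in it or meet it in a point, and since the code has strength $3$ they form an $(n,m)$-set there. By the geometric meaning recorded after Definition~\ref{wdef}, $w(P)+2$ is exactly the number of codelines meeting the preimage of $P$ nontrivially, and each such codeline contributes either a line or a point of a strength-$3$ configuration in $PG(4,2)$. Lemma~\ref{V5lemma} bounds a $(2,m)$-set in $PG(4,2)$ by $m\leq 4$, hence at most $2+4=6$ codelines are involved, giving $w(P)\leq 4$. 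The only subtlety is to check that the two fixed lines $L_1,L_2$ really are among the codelines contained in the preimage of $P$; but that holds because $P$ is a point of $\Pi=V/\langle e_1,\dots,e_4\rangle$, so its preimage contains $\langle e_1,e_2,e_3,e_4\rangle$ and hence both $L_1$ and $L_2$.

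Next, for a line $g$ of $\Pi$ the preimage is a $PG(5,2)$, again containing $L_1$ and $L_2$. The number of codelines in this $PG(5,2)$ is $w(g)+2$ by Definition~\ref{wdef}, and those codelines form an $(n,0)$-set of strength $3$ in $PG(5,2)$ (here there are no stray points: a codeline either lies in the preimage of $g$ or, since strength is $3$, it meets it in a configuration already counted — what matters is the count of codelines \emph{contained} in the preimage). By Lemma~\ref{HOlemma} (equivalently Proposition~\ref{atmost7prop}), an $(n,0)$-set in $PG(5,2)$ has $n\leq 6$, whence $w(g)+2\leq 6$ and $w(g)\leq 4$. To sharpen this to $w(g)\leq 3$ one uses that $L_1$ and $L_2$ are two of these codelines together with the quantum/parity information: a $(6,0)$-set in $PG(5,2)$ is the binary hyperoval, which is rigid, and one argues that a hyperoval containing the prescribed pair $L_1,L_2$ forces a configuration incompatible with $g$ being a line of $\Pi$ of weight $4$ — more directly, $w(g)=4$ would mean the preimage $PG(5,2)$ contains $6$ codelines forming a hyperoval, and then the remaining $7$ codelines all meet this secundum, contradicting the quantum condition parity (an odd number must meet a secundum, but $13-6=7$ is odd so this alone is fine; the real obstruction is the extension structure — I expect the argument here is that a hyperoval uses up all $6$ "slots" and the two fixed lines plus the hyperoval's rigidity forces a coincidence of codelines, exactly the $L_6=L_7$-style contradiction from Section~\ref{puresection}). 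This is the step I expect to be the main obstacle, since it is the one place where a naive dimension count gives $4$ and one needs the finer classification plus the purity/strength-$3$ rigidity to cut it down to $3$.

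Finally, for a plane $E$ of $\Pi$ the preimage is a hyperplane $PG(6,2)$ of $U$, containing $L_1,L_2$, and $w(E)+2$ codelines lie in it; by Proposition~\ref{hyperplaneprop} these codelines together with the residual points of the other codelines form an $(n,13-n)$-set of strength $3$ in $PG(6,2)$ with $n=w(E)+2$. Proposition~\ref{atmost7prop} says there is no $(8,0)$-set in $PG(6,2)$, and in fact $n=w(E)+2\leq 7$ would give $w(E)\leq 5$; to get $w(E)\leq 4$ one again invokes the finer information — a $(7,m)$-set in $PG(6,2)$ with $m\geq 6$ exists only for the self-dual $(7,0)$-set by Proposition~\ref{71classiprop}, and here $m=13-7=6$, so if $w(E)=5$ the hyperplane would have to realize precisely the $(7,6)$-configuration, and one checks (using that $L_1,L_2$ are two of the seven lines and the automorphism-group structure of that configuration, or a short parity argument from Proposition~\ref{hyperplaneprop} with $n=7$ odd) that this cannot be completed to the full $13$-line quantum code — yielding $w(E)\leq 4$. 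So the proof is: apply Lemma~\ref{V5lemma} for points, Lemma~\ref{HOlemma}/Proposition~\ref{atmost7prop} for lines, and Proposition~\ref{atmost7prop}/Proposition~\ref{71classiprop} for planes, with the quantum condition supplying the final unit of sharpening in the line and plane cases.
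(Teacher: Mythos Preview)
Your treatment of $w(P)\leq 4$ is correct and matches the paper. Your outline for $w(E)\leq 4$ is also essentially the paper's: Proposition~\ref{atmost7prop} gives $w(E)\leq 5$, and if $w(E)=5$ the resulting $(7,6)$-set in the hyperplane must, by Proposition~\ref{71classiprop}, be the unique one arising from the self-dual $(7,0)$-set; the paper then simply observes that this explicit $(7,6)$-set violates the parity condition of Proposition~\ref{hyperplaneprop}. So that part is fine, though you should state the last step as a concrete check rather than a vague ``cannot be completed''.

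The genuine gap is your sharpening of $w(g)\leq 4$ to $w(g)\leq 3$. None of the mechanisms you float (hyperoval rigidity forcing a repeated codeline, an $L_6=L_7$-style collapse, secundum parity) works, and you correctly sense this. The paper's argument is different and depends on establishing the plane bound \emph{first}. Once $w(E)\leq 4$ is known, suppose $w(g)=4$. Every plane $E\supset g$ has $w(E)\geq w(g)=4$ (all line-weights are non-negative), hence $w(E)=4$ for each of the three planes through $g$, and $\sum_{E\supset g}w(E)=12$. But the quantum condition of Proposition~\ref{factorspaceprop} applied to $g$ says that the total multiplicity of multiset lines meeting $g$ is odd, and this total equals $\sum_{E\supset g}w(E)-2w(g)$, forcing $\sum_{E\supset g}w(E)$ to be odd. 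This is the ``plane of weight $5$'' the paper alludes to. So the correct order is points, then planes, then lines, and the line bound drops out of a one-line parity count rather than any structural analysis of the binary hyperoval.
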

\begin{proof}
The statement on points follows from Lemma~\ref{V5lemma}.
Proposition~\ref{atmost7prop} shows that $w(E)\leq 5.$
The hyperplane $H$ corresponding to a plane $E$ of weight $n$
yields an $(n+2,11-n)$-set. Assume $n=5.$ Then there is a $(7,6)$-set in $V_7.$ By Proposition~\ref{71classiprop}
the $7$
lines are uniquely determined as only the self-dual cyclic
example has more than $2$ extension points.
There is a uniquely determined $(7,6)$-set in $PG(6,2)$ (see Proposition~{71classiprop}),
but it does not satisfy the quantum condition
of Proposition~\ref{hyperplaneprop}. It follows $w(E)\leq 4.$
Assume now $w(g)=4.$ The quantum condition shows 
that it is contained in a plane of weight $5,$ contradiction.
\end{proof}

We can improve on Lemma~\ref{wboundlemma}:

\begin{Proposition}
\label{w(E)leq3prop}
$w(E)\leq 3$ for each plane $E$ of $\Pi.$
Each hyperplane $H$ of $U$ contains at most $5$ codelines.
The codelines define a quaternary $[13,4,8]$-code.
\end{Proposition}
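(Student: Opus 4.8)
The plan is to establish the three assertions in turn, the bound $w(E)\leq 3$ being the heart of the matter. By Lemma~\ref{wboundlemma} we have $w(E)\leq 4$, so assume for contradiction that $w(E)=4$ for some plane $E$ of $\Pi$, and let $H$ be the hyperplane of $U$ which is the preimage of $E$. Then $H$ contains exactly $w(E)+2=6$ codelines, and by Proposition~\ref{hyperplaneprop} the codelines cut out a $(6,7)$-set in $H$: the $6$ codelines contained in $H$ together with the $7$ points in which the remaining codelines meet $H$. These $7$ points are pairwise distinct, since two coincident ones would force two of the outside codelines through a common point, contradicting strength $3$; so this is a genuine $(6,7)$-set of strength $3$ in $H\cong PG(6,2)$.

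The decisive step is to read off the quantum condition of Proposition~\ref{hyperplaneprop} for the hyperplanes of $H$, which are secunda of $U$: each of them already meets all $6$ codelines inside $H$, so it must contain an odd number of the $7$ points of the $(6,7)$-set. A short linear-functional argument then shows that a set of $7$ points in $PG(6,2)$ meeting every hyperplane in an odd number must have vector sum zero. I would now split according to the span of the $6$ codelines in $H$. If they span only a $PG(5,2)$, then by Lemma~\ref{HOlemma} they are the binary hyperoval and, again by Lemma~\ref{HOlemma}, none of the $7$ points lies in that $PG(5,2)$; writing each such point as $f+w$ with $w$ in the $PG(5,2)$ and a fixed $f$ outside, their sum is $f$ plus an element of the $PG(5,2)$, hence nonzero, a contradiction. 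If instead the $6$ codelines span $H$, I would argue that the $(6,7)$-set embeds in the uniquely determined $(7,7)$-set of Proposition~\ref{71classiprop} by adjoining one more line; the seven points of that $(7,7)$-set sum to a nonzero point of $PG(6,2)$, again contradicting the vanishing sum. Either way $w(E)\leq 3$. Since nothing in this argument used the particular choice of base codelines $L_1,L_2$, the same bound holds for the factor space attached to any pair of codelines.

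For the second assertion, a hyperplane $H'$ of $U$ containing at least two codelines contains the $PG(3,2)$ they span, hence is the preimage of a plane of the factor space defined by that pair; the bound just proved gives at most $5$ codelines in $H'$, and hyperplanes with at most one codeline trivially contain at most $5$. For the last assertion, the codelines define a quaternary additive code of binary dimension $8$ (so $k=4$) and length $13$ whose minimum quaternary weight equals $13$ minus the maximal number of codelines contained in a hyperplane of $U$; by the second assertion this is $\geq 8$. Conversely, summing $w(E)$ over the $15$ planes of $\Pi$ gives $3\sum_g w(g)=33>2\cdot 15$, so some plane has $w(E)=3$, i.e.\ some hyperplane of $U$ contains exactly $5$ codelines, which produces a codeword of weight $8$. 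Hence the code is a $[13,4,8]$-code.

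The main obstacle is the case in which the six codelines in $H$ span all of $H$: one must exclude every completion to a $(6,7)$-set whose seven points meet each hyperplane of $H$ oddly. Whether this follows cleanly from an embedding result into the $(7,7)$-set of Proposition~\ref{71classiprop}, or instead requires a further subdivision according to the span of the seven points (applying Lemmas~\ref{V5lemma},~\ref{HOlemma} and Proposition~\ref{atmost7prop} to the induced $(j,13-j)$-sets in the various $PG(4,2)$'s and $PG(5,2)$'s), or finally a direct computer enumeration of the finitely many $(6,7)$-sets, is the one point that has to be pinned down carefully.
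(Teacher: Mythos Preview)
Your reformulation of the quantum condition on $H$ as ``the seven points of the $(6,7)$-set sum to zero'' is correct and useful: for each nonzero linear functional $f$ on $H$, an odd number of the seven $M_i$ lie in $\ker f$, so an even number have $f(M_i)=1$, whence $f(\sum M_i)=0$ for all $f$.

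However, your Case~1 (the six codelines in $H$ span only a $PG(5,2)$) is already vacuous by Lemma~\ref{wboundlemma}. Since $H$ is the preimage of $E$, it contains $\langle L_1,L_2\rangle$, so $L_1,L_2$ are among the six codelines; a $PG(5,2)$ containing all six then contains $\langle L_1,L_2\rangle$ and is the preimage of some line $g\subset E$ with $w(g)=4$, contradicting $w(g)\leq 3$. Your argument there is valid but unnecessary, and the entire content of the proposition lies in your Case~2.

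That case has a genuine gap, which you correctly flag. The embedding claim---that a $(6,7)$-set with six spanning lines extends by one line to the $(7,7)$-set of Proposition~\ref{71classiprop}---is not supported by anything in the paper. Proposition~\ref{71classiprop} goes the other way: it starts from a specific $(7,0)$-set and adds points. To use it you would have to show that your six lines extend to a $(7,0)$-set at all, that the extension is necessarily the self-dual one (there are three $(7,0)$-sets and only the self-dual one has more than two extension points), and that your seven points then coincide with the seven canonical points. None of this is forced.

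The paper does not attempt such an embedding. It splits instead according to $\max_{g\subset E} w(g)$: computer searches eliminate $w(g)=3$ and $w(g)=2$, which (after varying the base pair $L_1,L_2$) reduces to the case where any four of the six codelines span $H$; a further computer classification yields exactly four such six-line configurations, and for each a computer check shows no $(6,7)$-completion satisfies the parity condition. So the obstacle you isolate is resolved by enumeration, not by a structural embedding.

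Your derivations of the second and third statements are fine; the averaging argument showing some plane has $w(E)=3$, hence the minimum distance is exactly $8$, is a nice touch not spelled out in the paper.
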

\begin{proof}
All three statements of the proposition are equivalent.
Assume $w(E)=4.$ Assume at first $E$ contains a line $g$ such
that $w(g)=3.$
Then in the $PG(5,2)$ corresponding to $g$ we have the lines $L_1,\dots ,L_5$
corresponding to an oval in $PG(2,4)$ and $L_6=\langle e_1+e_4+e_5+e_6,e_7\rangle$ 
in the
hyperplane corresponding to $E.$ Those $6$ lines must be completable
to a $(6,7)$-system in $PG(6,2)$ which satisfies the quantum condition: 
each hyperplane of the $PG(6,2)$ must meet the set of $7$ 
extension points in odd cardinality. A computer search shows that this 
problem has no solution.
\par
Assume next $E$ contains a line $g$ of weight $2.$  
We have the usual lines $L_1,\dots ,L_4$ in $PG(5,2)$ and two more lines in the hyperplane 
which are not in the secundum. By Lemma~\ref{HOlemma} one of those lines can be chosen as
$L=\langle e_1+e_3+e_4+e_6,e_7\rangle .$ It remains to find the one remaining line
and the system of $7$ points in $PG(6,2)$ completing it to a $(6,7)$-system that satisfies the quantum condition.
A computer search shows that there is no solution. 
At this point we have shown the following:
\begin{itemize}
\item
Each hyperplane $H$ of $U$ which contains $6$
codelines is generated by each $4$ of its codelines.
\end{itemize}
This follows directly from the fact that for each plane
$E$ of weight $4$ of $\Pi$ we have $w(g)\leq 1$ for each
line $g\subset E.$ Observe that we could have started from any pair of codelines
instead of $L_1,L_2$ and considered the hyperplane corresponding to a plane of
weight $4$ in the factor space.
\par
A computer search showed that
there are exactly four families of $6$ lines in $PG(6,2)$ satisfying the following:
\begin{itemize}
\item
Any three of the lines are in general position.
\item
Any four of the lines generate the ambient space $PG(6,2).$
\end{itemize}

Here they are:

$$\left(\begin{array}{c|c|c|c|c|c}
L_1 & L_2 & L_3 & L_4 & L_5 & L_6 \\
10  & 00 & 00 & 10 & 00 & 01        \\
01  & 00 & 00 & 00 & 10 & 10    \\
00  & 10 & 00 & 10 & 10 & 10   \\
00  & 01 & 00 & 00 & 01 & 10   \\ \hline
00  & 00 & 10 & 10 & 01 & 11  \\
00  & 00 & 01 & 00 & 10 & 01   \\
00  & 00 & 00 & 01 & 01 & 01   \\
\end{array}\right)$$

$$\left(\begin{array}{c|c|c|c|c|c}
L_1 & L_2 & L_3 & L_4 & L_5 & L_6 \\
10  & 00 & 00 & 10 & 00 & 11        \\
01  & 00 & 00 & 00 & 10 & 10    \\
00  & 10 & 00 & 10 & 10 & 10   \\
00  & 01 & 00 & 00 & 01 & 11   \\ \hline
00  & 00 & 10 & 10 & 01 & 11  \\
00  & 00 & 01 & 00 & 10 & 11   \\
00  & 00 & 00 & 01 & 01 & 01   \\
\end{array}\right)$$

$$\left(\begin{array}{c|c|c|c|c|c}
L_1 & L_2 & L_3 & L_4 & L_5 & L_6 \\
10  & 00 & 00 & 10 & 00 & 11        \\
01  & 00 & 00 & 00 & 10 & 10    \\
00  & 10 & 00 & 10 & 01 & 10   \\
00  & 01 & 00 & 00 & 10 & 11   \\ \hline
00  & 00 & 10 & 10 & 10 & 11  \\
00  & 00 & 01 & 00 & 01 & 11   \\
00  & 00 & 00 & 01 & 01 & 01   \\
\end{array}\right)$$

$$\left(\begin{array}{c|c|c|c|c|c}
L_1 & L_2 & L_3 & L_4 & L_5 & L_6 \\
10  & 00 & 00 & 10 & 10 & 01        \\
01  & 00 & 00 & 00 & 11 & 10    \\
00  & 10 & 00 & 10 & 11 & 10   \\
00  & 01 & 00 & 00 & 01 & 10   \\ \hline
00  & 00 & 10 & 10 & 11 & 11  \\
00  & 00 & 01 & 00 & 11 & 01   \\
00  & 00 & 00 & 01 & 01 & 01   \\
\end{array}\right)$$

In  each of those cases another computer program shows that the corresponding family 
$F$ of codelines 
cannot be completed by a set $S$ of $7$ points in $H=PG(6,2)$ which together 
with the codelines form a $(6,7)$-set of strength $3$ and such that the quantum condition is satisfied.
\end{proof}

\section{Excluding a special configuration}
\label{penultsection}

In this section we show the following:

\begin{Proposition}
Any five codelines generate either the ambient space $U$ or a hyperplane.
\end{Proposition}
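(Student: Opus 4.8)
The plan is to recast the statement geometrically and then reduce it to an essentially finite problem. Since any three codelines are in general position (strength $3$), three codelines already span a $PG(5,2)$; hence five codelines fail to generate $U$ or a hyperplane precisely when they span only a $PG(5,2)$, i.e. when some secundum $S$ of $U$ contains five codelines. Thus the Proposition is equivalent to: no secundum of $U$ contains five codelines. Suppose, for contradiction, that a secundum $S$ contains codelines $L_1,\dots ,L_5$.

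The first step is to fix the position of $S$ inside $U$. The three hyperplanes $H_1,H_2,H_3$ of $U$ through $S$ each contain $S$, hence at least these five codelines; by Proposition~\ref{w(E)leq3prop} each contains exactly five, so each $H_i$ contains precisely $L_1,\dots ,L_5$ and no further codeline, and in particular $S$ itself contains exactly five codelines. I then claim that every remaining codeline $L_j$ ($6\leq j\leq 13$) is skew to $S$: otherwise $L_j$ would meet $S$ in a point without lying in $S$, whence $L_j+S$ is a hyperplane through $S$ — one of the $H_i$ — and $L_j\subseteq H_i$, contradicting the previous sentence. So $L_6,\dots ,L_{13}$ are eight codelines each complementary to $S$ in $U$, while by Lemma~\ref{HOlemma} the codelines $L_1,\dots ,L_5\subseteq S$ form a $(5,0)$-set, unique up to projectivity: five of the six lines of a binary hyperoval of $S$, which I normalize once and for all.

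The second step is to describe the eight skew codelines. Fixing one of them, say $L_6$, yields $U=S\oplus L_6$, in which $L_7,\dots ,L_{13}$ are the graphs of injective linear maps $\psi_7,\dots ,\psi_{13}\colon L_6\to S$ with pairwise injective differences (this records that the eight skew lines are pairwise skew). The surviving strength-$3$ conditions become conditions inside $S$: for every pair $j<k$ among $L_6,\dots ,L_{13}$, the set $(L_j+L_k)\cap S$ is genuinely a line of $S$ (since each skew codeline meets $S$ trivially and is complementary to it), and it must be skew, in $S$, to each of $L_1,\dots ,L_5$; moreover any three of $L_6,\dots ,L_{13}$ must span a $PG(5,2)$. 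Finally the orthogonality condition in its geometric form — every secundum of $U$ meets an odd number of codelines, cf. the proof of Proposition~\ref{factorspaceprop} — contributes parity constraints: applied to secunda of the shape $\langle L_1,L_2,L_j\rangle$, which meet none of $L_3,L_4,L_5$ because $L_1,L_2$ are in general position with each of $L_3,L_4,L_5$, it forces an even number of the remaining skew codelines to meet each such secundum.

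The main obstacle is this last step: showing that no placement of the eight skew codelines is compatible with all of these strength-$3$ and parity constraints. The reductions above confine everything to a finite search over the fixed hyperoval in $S$, whose automorphism group of order $3\times 6!$ can be exploited to cut down the number of cases, and I would finish the argument by an exhaustive computer search in the style of those already used for Proposition~\ref{w(E)leq3prop}; I do not expect a short closed-form contradiction that avoids this case analysis.
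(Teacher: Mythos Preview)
Your reduction is correct and leads to the same endpoint as the paper: once five codelines lie in a secundum $S$, Proposition~\ref{w(E)leq3prop} forces the three hyperplanes through $S$ to contain exactly those five, hence the remaining eight codelines are skew to $S$; the five lines in $S$ are then normalized to a sub-hyperoval, and one finishes by a computer search verifying that no placement of the eight skew lines satisfies all strength-$3$ and self-orthogonality constraints. The paper does exactly this, so your outline is on target.

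Where the two arguments differ is in how the search is staged. You parametrize the eight skew lines directly as graphs of injective linear maps $L_6\to S$ and propose searching that space. The paper instead slices first: it fixes one hyperplane $H\supset S$, records the eight intersection points $M_i=L_{i+5}\cap H\in H\setminus S$, and imposes on this $(5,8)$-set in $H$ both the strength-$3$ conditions (your condition that $(L_j+L_k)\cap S$ avoid each $L_a$ becomes simply $w_i+w_j\notin L_1\cup\dots\cup L_5$) and the quantum condition restricted to secunda contained in $H$ (equivalently, the $(7,8)$-matrix of the $M_i$ generates an even-weight code). Only twelve point-configurations $\mathcal M$ survive this first pass, and for each of them the paper then tries to lift the $M_i$ to full codelines and check the remaining orthogonality constraints. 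This two-stage organization keeps the search small; your single-stage search over seven injective maps into $S$ is larger but of course still finite, and the hyperoval's automorphism group of order $3\cdot 6!$ that you mention is indeed available in either formulation. One small point: the single parity example you give (secunda $\langle L_1,L_2,L_j\rangle$) is far from the full quantum condition, so your computer search must impose orthogonality for \emph{all} secunda, not just those; the paper handles this by checking the even-weight condition for the code generated by the $M_i$ and then the full self-orthogonality at the lifting stage.
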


Assume this is not the case. If some five codelines were in a $PG(4,2)$ then
some hyperplane would contain six codelines, contradicting Proposition~\ref{w(E)leq3prop}.
Assume therefore some five codelines generate a secundum $S.$
In terms of the factor space $\Pi$ this means there is some
line $g_0$ of weight $3.$ As $w(E)\leq 3$ for each plane $E$ of
$\Pi$ this implies $w(g)=0$ for each 
line $g\not=g_0$ intersecting $g_0$ nontrivially. 
\par
The codelines in $S$ can be chosen as
$L_1,\dots ,L_5$ according to Lemma~\ref{HOlemma}. Let now $H\supset S$ be a hyperplane
and ${\cal M}=\lbrace M_0,\dots ,M_7\rbrace$ the points of intersection with
the eight remaining codelines. Then $M_i\notin S.$ Without restriction $M_0=e_7.$
Write $M_i=e_7+w_i.$ Then the following conditions must be satisfied:

\begin{enumerate}
\item
$w_i\notin L_1\cup\dots L_5$ for $i=1,\dots ,7.$
\item
$w_i+w_j\notin L_1\cup\dots L_5$ for $i\not=j.$
\item
Let $W$ be the $(7,8)$-matrix with the elements of ${\cal M}$ as columns. Then all codewords of the code generated by $W$ have even weights.
\end{enumerate}

Here the last condition represents the quantum condition: each hyperplane of $H$
meets ${\cal M}$ in even cardinality.
\par
A computer search showed that up to equivalence there are $12$ systems ${\cal M}$
satisfying the conditions above.
 
Here is the structure of the generator matrix that far:

$$\left(\begin{array}{c|c|c|c|c||c|l|c|c|c|c|c|c}
L_1 & L_2 & L_3 & L_4 & L_5 & L_6 & L_7 & L_8 & L_9 & L_{10} & L_{11} &L_{12}&L_{13}\\
10  & 00 & 00 & 10 & 10 & 00 & 1  &&&&&&                  \\
01  & 00 & 00 & 01 & 01 & 00 & 0  &&&&&&             \\
00  & 10 & 00 & 10 & 11 & 00 & 1  &&&&&&                \\
00  & 01 & 00 & 01 & 10 & 00 & 0  &&&&&&                     \\ \hline
00  & 00 & 10 & 10 & 01 & 00 & 0  &  &  &  &  &  &  \\
00  & 00 & 01 & 01 & 11 & 00 & 0  &  &  &  &  &  &   \\
00  & 00 & 00 & 00 & 00 & 10 & 10 & 10 & 10 & 10 & 10 & 10 & 10  \\
00  & 00 & 00 & 00 & 00 & 01 & 01 & 01 & 01 & 01 & 01 & 01 & 01  \\
\end{array}\right)$$ 

For each choice of ${\cal M}$ we need to determine the solutions of the problem in $PG(3,2)$ (the last four rows of the generator matrix).
Finally the generator matrix needs to be completed.
The computer showed that this completion is impossible.

\section{Completing the proof} 
\label{endsection}

Let $L_1,\dots ,L_5$ be codelines not generating the ambient space. They generate a hyperplane $H.$ Consider the corresponding $(5,8)$-set in $H.$ 
The lines define an additive $[5,3.5]_4$-code of strength $3.$
As its dual, a $[5,1.5,4]_4$-code, is uniquely determined
(corresponding to a set of $5$ lines in the Fano plane), the
same is true of the code itself. We can therefore choose
$$L_1=\langle e_1,e_2\rangle ,L_2=\langle e_3,e_4\rangle ,L_3=\langle e_5,e_6\rangle ,$$
$$L_4=\langle e_1+e_3+e_5,e_7\rangle ,L_5=\langle e_1+e_4+e_6,e_2+e_3+e_7\rangle .$$
No four of those are on a hyperplane.
How many points complete them to a $(5,1)$-set
of strength $3?$
There are $15$ points on the lines, $10\times 3/2=15$ in the intersection of the 
two spaces generated by two
lines and $10\times 6$ further points on spaces generated by two lines. 
This leaves space for $127-90=37$ extension points.
Within this set of $37$ points we have to find a subset ${\cal M}$ of eight points which satisfy the conditions

\begin{itemize}
\item
${\cal M}$ is a cap.
\item
Secants of ${\cal M}$ do not meet any of the lines $L_i.$
\item
Let $W$ be the $(7,8)$-matrix with the elements of ${\cal M}$ as columns. Then all codewords of the code generated by $W$ have even weights.
\end{itemize}

The general form of the generator matrix is

$$\left(\begin{array}{c|c|c|c|c||r|r|r|r|r|r|r|r}
L_1 & L_2 & L_3 & L_4 & L_5 & L_6 & L_7 & L_8 & L_9 & L_{10} & L_{11} &L_{12}&L_{13}\\
10  & 00 & 00 & 10 & 10 &  0 &   &&&&&&                  \\
01  & 00 & 00 & 00 & 01 &  0 &   &&&&&&             \\
00  & 10 & 00 & 10 & 01 &  0 &   &&&&&&                \\
00  & 01 & 00 & 00 & 10 &  0 &   &&&&&&                     \\ \hline
00  & 00 & 10 & 10 & 00 &  0 &   &  &  &  &  &  &  \\
00  & 00 & 01 & 00 & 10 &  0 &   &  &  &  &  &  &   \\
00  & 00 & 00 & 01 & 01 &  0 &    &    &    &    &    &    &    \\
00  & 00 & 00 & 00 & 00 & 01 & 01 & 01 & 01 & 01 & 01 & 01 & 01  \\
\end{array}\right)$$ 

A computer program did the following:

\begin{itemize}
\item
Determine the solutions ${\cal M}.$ 
\item
For each solution ${\cal M}$ determine the $8$ lines in $\Pi$ completing
the projections of the eight points of ${\cal M}$ such that the orthogonality
condition on the last four rows of the generator matrix are satisfied.
\item
Complete the generator matrix.
\end{itemize}

Observe that in the second step the projection to $\Pi$ may lead to repeated
points. This has to be taken into account when adapting the lines in $\Pi$
to the points of ${\cal M}.$ The computer search showed that there are no solutions.
This completes the proof of Theorem~\ref{maintheorem}.


\begin{thebibliography}{99}
\bibitem{Bbook} J.~Bierbrauer:
  {\it Introduction to Coding Theory,} 
  Chapman and Hall/ CRC Press 2004.
\bibitem{d3quantspectrum} J. Bierbrauer:
  {\it The spectrum of stabilizer quantum codes of distance $3,$} 
  submitted for publication in
  {\sl IEEE Transactions on Information Theory.}
\bibitem{addZ4} J.~Bierbrauer, G.~Faina, S.~Marcugini, F.~Pambianco: 
  {\it Additive quaternary codes of small length,}
  {\sl Proceedings ACCT, Zvenigorod (Russia) September 2006,} 15-18.
\bibitem{08addcodes} J.Bierbrauer, Y. Edel, G. Faina, S. Marcugini, F. Pambianco:
  {\it Short additive quaternary codes,} 
  {\sl IEEE Transactions on Information Theory~\bf55} (2009), 952-954.
\bibitem{addJCTA08} J.~Bierbrauer, S.~Marcugini and F.~Pambianco:
  {\it A geometric non-existence proof of an extremal additive code,}
  {\sl Journal of Combinatorial Theory A,} to appear.
\bibitem{quantgeom} J. Bierbrauer, G. Faina, M. Giulietti, S.~Marcugini,
  F.~Pambianco: 
  {\it The geometry of quantum codes,}
  {\sl Innovations in Incidence Geometry~\bf6} (2009), 53-71.
\bibitem{CRSS} A.~R.~Calderbank, E.~M.~Rains, P.~M.~Shor, N.~J.~A.~Sloane:\\
  {\it Quantum error-correction via codes over $GF(4),$} \\
  {\sl IEEE Transactions on Information Theory \bf44} (1998), 1369-1387.
\bibitem{DP} Lars Eirik Danielsen, Matthew G. Parker:
  {\it Directed graph representation of half-rate additive codes over $GF(4),$}
  manuscript.
\bibitem{Grassl} M. Grassl: http://www.codetables.de/
\bibitem{HanKim} S. Han and J.L. Kim:
  {\it Formally self-dual additive codes over~$\ef_4,$} 2008.
\end{thebibliography}
\end{document}